\newtheorem{lemma}{Lemma}
\newtheorem{corollary}{Corollary}
\def\BibTeX{{\rm B\kern-.05em{\sc i\kern-.025em b}\kern-.08em
    T\kern-.1667em\lower.7ex\hbox{E}\kern-.125emX}}
\begin{document}

\title{Throughput Maximization for Full Duplex Wireless Powered Communication Networks\\
\thanks{This work is supported by Scientific and Technological Research Council of Turkey Grant $\#$117E241.}
}
\author{\IEEEauthorblockN{Muhammad Shahid Iqbal}
\IEEEauthorblockA{\textit{Electrical and Electronics Engineering} \\
\textit{Koc University}\\
Istanbul, Turkey \\
miqbal16@ku.edu.tr}
\and
\IEEEauthorblockN{Yalcin Sadi}
\IEEEauthorblockA{\textit{Electrical and Electronics Engineering} \\
\textit{Kadir Has University}\\
Istanbul, Turkey \\
yalcin.sadi@khas.edu.tr }
\and
\IEEEauthorblockN{Sinem Coleri}
\IEEEauthorblockA{\textit{Electrical and Electronics Engineering} \\
\textit{Koc University}\\
Istanbul, Turkey \\
scoleri@ku.edu.tr}
}
\maketitle

\begin{abstract}
In this paper, we consider a full duplex wireless powered communication network where multiple users with RF energy harvesting capabilities communicate to a hybrid energy and information access point. An optimization framework is proposed with the objective of maximizing the sum throughput of the users subject to energy causality and maximum transmit power constraints considering a realistic energy harvesting model incorporating initial battery levels of the users. The joint optimization of power control, time allocation and scheduling is mathematically formulated as a mixed integer non linear programming problem which is hard to solve for a global optimum. The optimal power and time allocation and scheduling decisions are investigated separately based on the optimality analysis on the optimization variables. Optimal power and time allocation problem is proven to be convex for a given transmission order. Based on the derived optimality conditions, we propose a fast polynomial-time complexity heuristic algorithm. We illustrate that the proposed algorithm performs very close-to-optimal while significantly outperforming an equal time allocation based scheduling scheme.
\end{abstract}

\begin{IEEEkeywords}
Throughput Maximization, Wireless Powered Communication Networks, Full Duplex, Scheduling
\end{IEEEkeywords}

\section{Introduction} \label{sec:introduction}

Lifetime of a sensor network is generally battery dependent and requires replacement of or recharging the batteries once depleted. The replacement can be difficult (for large networks), harmful (inside chemicals) or infeasible (inside living beings). Whereas, recharging can take place via energy harvesting from environment \cite{solar_EH} or wireless energy transfer (WET). Solar, wind and vibration are the possible choices for energy harvesting from environment, however, their dependence on environmental conditions and requirement of extra large size equipment is the bottleneck. On the other hand, WET via inductive or magnetic coupling is constrained by short range, alignment issues and large size. Energy harvesting through radio frequency (RF) has long range, small form factor and better control on the power transfer along with mobility which makes it a suitable choice for such networks \cite{RF}. In this technology, electromagnetic waves in the spectrum of $300$Hz to $300$GHz are used to transfer the energy.
RF energy transfer have been studied as simultaneous wireless information and power transfer (SWIPT) and wireless powered communication networks (WPCN). In SWIPT, information is transferred in both direction and users use the received signal for both information decoding and energy harvesting by using time switching or power splitting \cite{SWIPT_RE2}. Whereas, in WPCN, a dedicated transmitter is deployed at a hybrid access point (HAP) for downlink energy transmission and users harvest this energy for their uplink data transmissions \cite{WPCN_HTT}.

The Sum Throughput Maximization (STM) problem is investigated in WPCNs for different models and protocols. \textit{Harvest-then-transmit} is the first protocol in which the frame length is divided into two non-overlapping time intervals, for energy and data transmission \cite{WPCN_HTT}. Variations of STM problems are investigated under different objectives such as weighted throughput \cite{WPCN_weightedThroughputmax}, common throughput \cite{WPCN_commonThroughputmax} and minimum throughput \cite{WPCN_minThroughputmax} maximization. All of these WPCN studies consider an half duplex model and all users get equal time to harvest energy; therefore, no scheduling is required for STM. Besides, these studies lack the maximum transmit power constraint for the users. 
%Whereas, only \cite{WPCN_EnergySaving, WPCN_battery} considered battery for the users but in a limited way. Ignoring such realistic aspects of the subject makes their problems simpler and easier as compared to the problem investigated in this article.\\
 
Recently, full duplex (FD) became an active research area and a major technique being proposed for 5G and beyond networks. In the context of WPCN, FD is investigated in \cite{FD_WPCN_HD} where the HAP is operating in FD mode while the users are operating in half duplex mode which is extended for FD HAP and users in \cite{FD_WPCN_FD_transceiver}. The FD-WPCN throughput is maximized for point to point links in \cite{FD_WPCN_Kang} for a predetermined order under perfect self interference cancellation. The HAP with multiple antennas is investigated with the objective to know the effect of beamforming weights in \cite{WPCN_beamforming}. Due to FD, a user can harvest energy during both the transmission of other users and its own transmission, making scheduling important which is missing in the aforementioned studies.
The authors in \cite{WPCN_scheduling_Kalpant, WPCN_scheduling_Jie} considered scheduling for WPCN in a restricted fashion such as authors in \cite{WPCN_scheduling_Kalpant} divided frame in $M$ equal length slots and $K$ users are scheduled in these fixed length slots which leads to underutilization of the resources. In \cite{WPCN_scheduling_Jie}, authors have used Hungarian algorithm to find the schedule for the network with a predefined sum throughput limit. The determination of the cost matrix used by the Hungarian algorithm is a major computational burden for such schedule dependent energy harvesting problems with exponential-time complexity; hence, the application of Hungarian algorithm is not an efficient choice. Moreover, authors have only considered the frame length constraint which makes their problem simplistic compared to the problem model discussed in this paper.

The goal of this paper is to investigate the optimal time allocation, power control, and scheduling with the objective of maximizing the sum throughput for a set of energy harvesting users subject to the maximum transmit power and the energy causality constraints considering initial battery levels and a realistic energy harvesting model for a WPCN, in an in-band full-duplex system. The main contributions of the paper are listed as follows:

\begin{itemize}
\item We propose a new WPCN optimization framework for an in-band full-duplex energy harvesting network, employing the maximum transmit power constraint and an energy causality constraint which considers the initial battery levels and the energy harvesting capability of the users.
\item We characterize Sum Throughput Maximization Problem (STMP) based on the proposed WPCN model. We mathematically formulate the problem as a mixed integer nonlinear programming (MINLP) problem which is non-convex and thus generally difficult to solve for a global optimum. Then, we propose a solution framework based on the decomposition of the optimal power and time allocation and the scheduling decisions based on optimality analysis on the variables of the optimization problem.
\item We characterize the optimal power and time allocation problem (PTAP). PTAP gives the optimal solution of STMP for a given transmission order of the users. We further prove the convexity of PTAP which guarantees a polynomial-time complexity optimal algorithm.
\item We analyze the optimality conditions on the optimization variables of STMP. Based on this analysis, we propose a polynomial-time heuristic algorithm. We illustrate that the proposed algorithm performs very close to optimal for different simulation scenarios.
\end{itemize}

The rest of this paper is organized as follows. In Section \ref{sec:system_model}, we describe the WPCN model and assumptions used throughout the paper. In Section \ref{sec:problem}, we present the mathematical formulation of the sum throughput maximization problem and discuss its complexity. In Section \ref{sec:power}, we present the optimal power and time allocation problem and prove its convexity. In Section \ref{sec:algorithm}, we analyze the optimality conditions of the problem and propose a polynomial-time heuristic algorithm. In Section \ref{sec:performance}, we evaluate the performance of the proposed algorithm. Section \ref{sec:conclusion} presents the concluding remarks.

\section{System Model and Assumptions} \label{sec:system_model}

The system model and assumptions are described as follows:

\begin{figure}[t]
\centering
\includegraphics[width= 0.8 \linewidth]{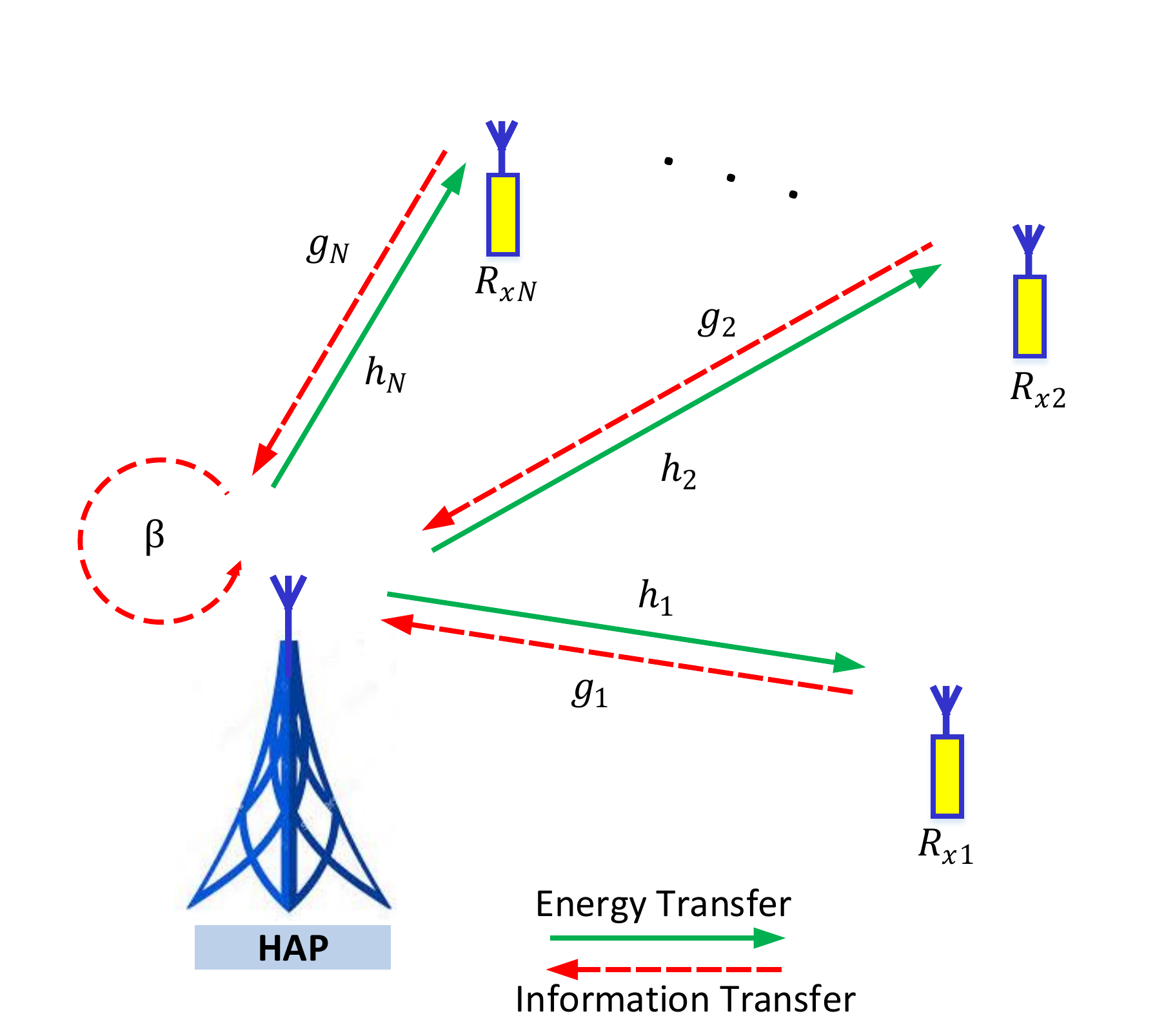}
\caption{System Model for Wireless Powered Communication Network} \label{fig:WPCN}
\end{figure}

\begin{enumerate}
\item The WPCN archictecture, as depicted in Fig. \ref{fig:WPCN}, consists of a HAP and N users; i.e., sensors and machine type communications devices. The HAP and the users are equipped with one full-duplex antenna for simultaneous wireless energy transfer and data transmission on downlink and uplink channels, respectively. The channel gains for the uplink and downlink channels are assumed to be different. The uplink channel gain from user $i$ to the HAP and the downlink channel gain from the HAP to user $i$ are denoted by $g_i$ and $h_i$, respectively. 
%The channels are assumed to be quasi-static exhibiting block-fading; i.e., the channel gains remain constant over the scheduling frame. We assume that the HAP has perfect channel state information; i.e., the channel gains are avaliable at the HAP.
\item The HAP has a stable energy supply and continuously radiates wireless energy with a constant power $P_h$ whereas the users do not have any external energy supply. Each user $i$ harvests the radiated energy from the HAP and stores in a rechargeable battery of capacity $B_{max}$ which is assumed to be large enough so that no overflow will occur. Each user has an initial energy $B_i$ stored in its battery at the beginning of the scheduling frame which includes the harvested and unused energy in the previous scheduling frames.
\item The energy harvesting rate of user $i$ from the HAP, denoted by $C_i$, is characterized as follows:
\begin{equation} \label{eq:en_harv_rate}
C_i=\eta_i h_i P_h
\end{equation}
where $\eta_i$ is the antenna efficiency of user $i$.
\item We consider Time Division Multiple Access (TDMA) as medium access control for the uplink data transmissions from the users to the HAP. The time is partitioned into scheduling frames which are further divided into variable-length time slots each of which is allocated to a particular user. 
%Note that a scheduling frame does not necessarily contain a time slot for each and every user in the network as long as the maximum throughput is achieved using a subset of users.
\item We use continuous rate transmission model, in which Shannon capacity formulation for an AWGN channel is used in the calculation of transmission rate $r_i$ of user $i$ as
 \begin{equation} \label{transmission_rate}
 r_i = W \log_{2}(1+k_iP_i),
 \end{equation}
where $P_i$ is the transmission power of user $i$, $W$ is the channel bandwidth, and $k_i$ is defined as $g_i/(N_oW+\beta P_h)$. The term $\beta P_h$ is the self interference at the HAP and $N_o$ is the noise power density.
\item We use continuous power model in which the transmission power $P_i$ of user $i$ can take any value below a maximum level $P_{max}$, which is imposed to limit the interference to nearby systems.
\end{enumerate}

\section{Sum Throughput Maximization Problem} \label{sec:problem}

In this section, we introduce the sum throughput maximization problem, denoted by STMP. The joint optimization of the time allocation, power control and scheduling with the objective of maximizing the sum throughput of the users is formulated as follows:\\

STMP:
\begin{subequations} \label{opt_problem}
\begin{align}
&\textit{maximize}
& & \sum_{i=1}^{N} W \tau_i \log_2(1+k_iP_i) \label{obj}\\
& \textit{subject to}
&& B_i+C_i\tau_0+C_i \sum_{j=1}^{N}a_{ji}\tau_j+C_i \tau_i -P_i\tau_i \geq 0,  \label{energyharvesting}\\
&&& a_{ij}+a_{ji}=1, \label{ordering}\\
&&& P_i\leq P_{max}, \label{pmax} \\
&&& \sum_{i=0}^{N}\tau_i\leq 1, \label{framelength} \\
& \textit{variables}
& & P_i \geq 0, \hspace*{0.1cm} \tau_i\geq 0, \hspace*{0.1cm} a_{ij} \in \{0,1\}.\label{vars}
\end{align}
\end{subequations}

The variables of the problem are $P_i$, the transmit power of user $i$; $\tau_i$, the transmission time of user $i$, and $a_{ij}$, binary variable that takes value $1$ if user $i$ is scheduled before user $j$ and $0$ otherwise. In addition, $\tau_0$ denotes an initial waiting time in which all users harvest energy without transmitting data.

The objective of the problem is to maximize the sum throughput of the users as given by Eq. (\ref{obj}). Eq. (\ref{energyharvesting}) is the energy causality constraint; i.e., the energy consumed by a user during transmission cannot exceed the total amount of available energy for that user including its initial battery level and the harvested energy until and during its transmission. Eqs. (\ref{ordering}) and (\ref{pmax}) represent the scheduling and maximum transmit power constraints, respectively. Eq. (\ref{framelength}) indicates that normalized schedule length of $1$ is allocated to the users.

In the following we illustrate that in the optimal solution, an initial waiting time in which all users only harvest energy is not needed.

\begin{lemma} \label{lemma:zerotau}
In the optimal solution of STMP, $\tau_0=0$.
\end{lemma}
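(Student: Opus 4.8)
The plan is to prove the lemma by an exchange argument: starting from any feasible solution with $\tau_0>0$, I would construct a feasible solution with $\tau_0=0$ whose objective is no smaller, so that optimality of the modified solution follows. The modification is to merge the idle harvesting interval into the first transmission slot. Concretely, let user $k$ be the user scheduled first, i.e. $a_{kj}=1$ for all $j\neq k$, and define a new solution by $\tau_0'=0$, $\tau_k'=\tau_k+\tau_0$, $\tau_j'=\tau_j$ and $P_j'=P_j$ for $j\neq k$, and $P_k'$ chosen below. The point is that the first user harvests at rate $C_k$ during its own slot anyway, so lengthening that slot replaces the wasted idle time without losing any harvested energy.

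First I would verify that nothing changes for the other users. For $j\neq k$ we have $a_{kj}=1$, so in the energy-causality constraint (\ref{energyharvesting}) the combination $C_j\tau_0+C_j\sum_i a_{ij}\tau_i$ is invariant under the swap: the loss of $C_j\tau_0$ from setting $\tau_0'=0$ is exactly offset by the gain $C_j\tau_0$ from the enlarged term $C_j a_{kj}\tau_k'$. Since $\tau_j$ and $P_j$ are untouched, (\ref{energyharvesting}) still holds for every $j\neq k$ and their throughput contributions are unchanged; (\ref{framelength}) also still holds because $\sum_{i=0}^N\tau_i$ is unchanged; and (\ref{ordering}), (\ref{pmax}) are trivially preserved for $j\neq k$.

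It remains to handle user $k$. Because $k$ is first, $a_{ik}=0$ for all $i$, so its available energy is $E_k:=B_k+C_k(\tau_0+\tau_k)$, and this is unchanged by the swap since $\tau_0+\tau_k$ is preserved; the original feasibility gives $P_k\tau_k\le E_k$ and $P_k\le P_{max}$. I would then set $P_k'=P_k\tau_k/(\tau_0+\tau_k)$, so that $P_k'\tau_k'=P_k\tau_k\le E_k$ (energy causality holds) and $P_k'\le P_k\le P_{max}$ (power constraint holds). For the objective, write $f(p)=\log_2(1+k_k p)$ and $t=\tau_k/(\tau_0+\tau_k)\in(0,1]$; the new contribution of user $k$ is $W(\tau_0+\tau_k)f(tP_k)$ and the old one is $W\tau_k f(P_k)=W(\tau_0+\tau_k)\,t f(P_k)$, and concavity of $f$ with $f(0)=0$ gives $f(tP_k)=f\bigl(tP_k+(1-t)\cdot 0\bigr)\ge t f(P_k)$, so the contribution does not decrease. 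Hence the total objective of the modified ($\tau_0=0$) solution is at least that of the original, which proves the claim.

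The argument is routine once the accounting is set up; the only genuinely delicate point is that $\tau_0$ must be absorbed into the \emph{first} slot (absorbing it into any later slot would strip the $C_j\tau_0$ harvesting term from the users preceding it and could break feasibility), and the only nontrivial inequality is the concavity bound $f(tp)\ge t f(p)$, which is immediate. A minor side remark: since $f$ is strictly concave, the above inequality is strict whenever $\tau_0>0$ and $P_k>0$, so in fact no optimal solution can have $\tau_0>0$ except in degenerate cases where the optimal throughput is zero; I would mention this only in passing. I would also note that the transmission order induced by the $a_{ij}$ is assumed to be a total order of the users, so a well-defined first user $k$ exists, and that if $\tau_k=0$ the same computation goes through with $P_k'=0$.
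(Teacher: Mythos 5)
Your proof is correct and follows essentially the same route as the paper: absorb $\tau_0$ into the first user's slot, keep all other users untouched, and use the fact that spreading the same energy over a longer transmission time (equivalently, concavity of the rate with $f(0)=0$) does not decrease throughput. Your write-up is somewhat more careful than the paper's (explicit feasibility checks for the other users, the $P_{max}$ constraint, and the $\tau_k=0$ case), but the underlying exchange argument is identical.
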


\begin{proof}
Suppose that in the optimal solution, we have $\tau_0^*>0$ and without loss of generality we have the following transmission schedule: $\tau_{1}^{*},\tau_{2}^{*},...,\tau_{N}^{*}$. Denote the sum throughput of the optimal solution by $R^{*}=\sum_{i=1}^N R_i^{*}$. Then, consider that we have an alternative schedule keeping the same transmission order in which $\tau_0^{**}=0$, $\tau_1^{**}=\tau_1^{*}+\tau_0^{*}$, and $\tau_j^{**}=\tau_j^{*}$ for all $j\in \lbrace 2,..., N\rbrace$. Denote the sum throughput of this schedule by $R^{**}=\sum_{i=1}^N R_i^{**}$. Clearly, $R_j^{**}=R_j^{*}$ since $\tau_j^{**}=\tau_j^{*}$ for all $j\in \lbrace 2,..., N\rbrace$. However, $R_1^{**}= \tau_1^{**} \log(1+k_1 E_1/\tau_1^{**})>\tau_1^{*} \log(1+k_1 E_1/\tau_1^{*})=R_1^{*}$ since the throughput of a user monotonically increases by decreasing transmit power for a constant energy consumption. Then, $R^{**}>R^{*}$. This is a contradiction.
\end{proof}

%Note that the foregoing lemma is a direct consequence of the full duplex model introduced in Section \ref{sec:system_model} since a user continues to harvest energy during data transmission. The scheduling algorithms proposed in previous studies incorporating half-duplex antennas at the users employ nonzero $\tau_0$ values (refs???).

STMP formulation is a MINLP problem thus difficult to solve for the global optimum \cite{opt_book}. In other words, finding a global optimum requires solutions with exponential time complexity which are intractable. Therefore, in the following, we follow a solution strategy based on the decomposition of the power and time allocation and the scheduling decisions and on the optimality analysis on the variables of the optimization problem.

\section{Optimal Power and Time Allocation} \label{sec:power}

In this section, we investigate optimal power control and time allocation problem, denoted by PTAP, to maximize the sum throughput for a given scheduling order of a set of users; i.e. $a_{ij}$s are given in STMP. Without loss of generality, we assume that user $i$ transmits during the time slot $i$.\\

PTAP:
\begin{subequations} \label{opt_problem2}
\begin{align}
&\textit{maximize}
& & \sum_{i=1}^{N} W \tau_i \log_2(1+k_i P_i) \label{obj2}\\
& \textit{subject to}
&& B_i+C_i \sum_{j=1}^{i} \tau_j  -P_i\tau_i \geq 0,  \label{energyharvesting2}\\
&&& P_i\leq P_{max}, \label{pmax2} \\
&&& \sum_{i=0}^{N}\tau_i\leq 1, \label{framelength2} \\
& \textit{variables}
& & P_i \geq 0, \hspace*{0.1cm} \tau_i\geq 0.\label{vars2}
\end{align}
\end{subequations}

\begin{lemma} \label{lemma:p2con}
PTAP is a convex optimization problem.
\end{lemma}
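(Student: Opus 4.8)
The plan is to exhibit a change of variables under which PTAP becomes the maximization of a concave function over a polyhedron, which is by definition a convex program. The key observation is that the energy $E_i := P_i\tau_i$ spent by user $i$ during its slot should be used as the decision variable in place of the power $P_i$.

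First I would replace each pair $(P_i,\tau_i)$ by $(E_i,\tau_i)$ with $E_i \ge 0$, $\tau_i \ge 0$; for $\tau_i>0$ this is a bijection onto the original variables via $P_i = E_i/\tau_i$, and the $i$-th summand of the objective becomes $W\tau_i\log_2(1 + k_iE_i/\tau_i)$. This is precisely the perspective of the map $x\mapsto W\log_2(1+k_ix)$; since the latter is concave on $x\ge 0$, its perspective $(E_i,\tau_i)\mapsto W\tau_i\log_2(1+k_iE_i/\tau_i)$ is jointly concave on $\{\tau_i>0\}$, and extending it by the value $0$ at $\tau_i=0$ --- the only consistent choice, since the transformed maximum-power constraint forces $E_i=0$ whenever $\tau_i=0$ --- keeps it concave. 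The objective is a sum of such terms and is therefore concave. Next I would check that the constraints are all affine in the new variables: (\ref{energyharvesting2}) becomes $B_i + C_i\sum_{j=1}^i\tau_j - E_i\ge 0$, (\ref{pmax2}) becomes $E_i\le P_{max}\tau_i$, and (\ref{framelength2}) together with the non-negativity constraints are affine as well. Hence the feasible region is a polyhedron, and maximizing a concave objective over it is a convex optimization problem; since the transformation preserves the optimal value and maps optimal solutions to optimal solutions, the same conclusion holds for PTAP.

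The only delicate point --- and the one I would spend the most care on --- is the boundary $\tau_i=0$: one has to confirm that the perspective extension remains concave there (a standard property of the closure of the perspective of a concave function) and that the change of variables loses no feasible or optimal point. Everything else is routine: the concavity-preservation of the perspective operation and the affineness of each constraint in $(E_i,\tau_i)$.
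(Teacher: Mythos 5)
Your proof is correct, but it takes a genuinely different route from the paper. The paper argues directly in the original variables $(P_i,\tau_i)$: it asserts that the Hessian of each term $W\tau_i\log_2(1+k_iP_i)$ and the Hessian of the left-hand side of (\ref{energyharvesting2}) are negative semidefinite, and concludes convexity from that. In fact that assertion does not hold in those coordinates: the Hessian of $\tau\log(1+kP)$ is $\bigl(\begin{smallmatrix}0 & k/(1+kP)\\ k/(1+kP) & -\tau k^2/(1+kP)^2\end{smallmatrix}\bigr)$, which has negative determinant and is therefore indefinite, and $-P_i\tau_i$ is an indefinite bilinear form, so the sublevel structure of (\ref{energyharvesting2}) is not convex in $(P_i,\tau_i)$ (e.g., $\{P\tau\le 1,\,P,\tau\ge 0\}$ fails convexity). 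Your change of variables to the consumed energy $E_i=P_i\tau_i$ is exactly what repairs this: the objective becomes a sum of perspectives of concave functions, hence jointly concave (with the careful extension by $0$ at $\tau_i=0$ that you discuss), and constraints (\ref{energyharvesting2})--(\ref{framelength2}) become affine, so the reformulated problem is a convex program whose optimal value and optimizers correspond to those of PTAP. This is the standard argument in the WPCN literature and it delivers the conclusion the paper actually needs (polynomial-time solvability via convex programming, complexity $\mathcal{O}(N^3)$). The one caveat worth stating explicitly is that what you prove---and what can be proven---is that PTAP is \emph{equivalent to} a convex optimization problem after this variable transformation, rather than that it is convex in the $(P_i,\tau_i)$ parameterization; the paper's shorter Hessian-based claim, had it been true, would have given the literal statement, but as written it has a gap that your perspective-function argument closes.
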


\begin{proof}
The throughput function for user $i$,  $W \tau_i \log_2(1+k_i P_i)$ is a concave function of $\tau_i$ and $P_i$ since its Hessian is negative semi-definite. Since a nonnegative sum of concave functions is concave, the objective function of PTAP is concave. The constraints (\ref{pmax2}) and (\ref{framelength2}) are affine and the left hand side of constraint (\ref{energyharvesting2}) is concave due to negative semi-definiteness of its Hessian. Therefore, PTAP is a convex optimization problem. 
\end{proof}

Since PTAP is a convex optimization problem, it can be solved by common convex optimization techniques which have a computational complexity on the order of $\mathcal{O}(N^3)$.

Once the convexity of PTAP is proven, a straigthforward solution to determine the optimal schedule, i.e., to optimally solve STMP, would be an exhaustive search algorithm which enumerates all possible transmission orders for the set of users and then determine the one maximizing the sum throughput by solving each of them using convex programming. However, such an optimal algorithm has an exponential complexity of $\mathcal{O}(N! N^3)$ (due to $N!$ possible transmission orders) which makes it computationally intractable. Hence, in the following, we present a polynomial time complexity heuristic algorithm by investigating the characteristics of an optimal solution.

\section{Scheduling Algorithm} \label{sec:algorithm}
In this section, we propose a polynomial-time algorithm based on the optimality conditions of the STMP derived in the following.

The following lemma proves that sum throughput maximization requires utilization of the scheduling frame fully.

\begin{lemma} \label{lemma:length}
In the optimal solution of STMP, the constraint (\ref{framelength}) must be staisfied with equality; i.e., $\sum_{i=1}^{N}\tau_i=1$.
\end{lemma}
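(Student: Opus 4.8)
The plan is to prove Lemma~\ref{lemma:length} by contradiction: assume an optimal solution in which $\sum_{i=1}^N \tau_i^* = T < 1$ (note $\tau_0^* = 0$ by Lemma~\ref{lemma:zerotau}, so this is equivalent to slack in constraint~(\ref{framelength})), and construct a feasible solution with strictly larger sum throughput. The natural idea is to scale all time slots up by the factor $1/T > 1$, setting $\tau_i^{**} = \tau_i^*/T$, so that $\sum_{i=1}^N \tau_i^{**} = 1$, while keeping the transmission order fixed. The key question is what to do with the powers, and the cleanest route is to work with the energy variables $E_i = P_i \tau_i$ rather than $P_i$ directly, exactly as in the proof of Lemma~\ref{lemma:zerotau}.

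First I would rewrite the objective and constraints in terms of $E_i$: user $i$'s throughput is $W\tau_i \log_2(1 + k_i E_i/\tau_i)$ and the energy causality constraint~(\ref{energyharvesting}) becomes $B_i + C_i\sum_{j=1}^i \tau_j - E_i \geq 0$ (using $\tau_0 = 0$ and the fixed order). Now consider two candidate modifications of the optimal solution. In the first, keep $E_i^{**} = E_i^*$ unchanged while scaling the times; then each term $W\tau_i^{**}\log_2(1+k_i E_i^*/\tau_i^{**})$ — as a function of $\tau_i$ with fixed energy $E_i^*$ — is monotonically increasing in $\tau_i$ (this is the same monotonicity fact invoked in Lemma~\ref{lemma:zerotau}: throughput increases as transmit power decreases for fixed energy consumption), so each throughput term strictly increases since $\tau_i^{**} = \tau_i^*/T > \tau_i^*$. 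This already gives $R^{**} > R^*$. It remains only to check feasibility: the maximum-power constraint $P_i^{**} = E_i^*/\tau_i^{**} = T \cdot E_i^*/\tau_i^* = T P_i^* \leq P_i^* \leq P_{max}$ holds, and the energy causality constraint holds because $B_i + C_i\sum_{j=1}^i \tau_j^{**} \geq B_i + C_i\sum_{j=1}^i \tau_j^* \geq E_i^* = E_i^{**}$, where the first inequality uses $\tau_j^{**} \geq \tau_j^*$. The frame constraint is satisfied with equality by construction, and all variables remain nonnegative. Hence the scaled solution is feasible and strictly better, contradicting optimality.

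The main obstacle — really the only subtlety — is handling the degenerate case where some $\tau_i^* = 0$ (so user $i$ transmits for zero time and contributes zero throughput), since then scaling by $1/T$ leaves $\tau_i^{**} = 0$ and that slot still contributes nothing. This is not actually a problem for the argument: if every $\tau_i^* = 0$ then $T = 0 < 1$ and the whole frame is idle, which is clearly not optimal (any single user transmitting with positive time and positive power yields positive throughput, and such a solution is feasible whenever $B_i > 0$ or $C_i > 0$, i.e., whenever the instance is nontrivial); and if at least one $\tau_i^* > 0$, that slot's throughput strictly increases under the scaling while no slot's throughput decreases, so $R^{**} > R^*$ still holds. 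I would state this case split briefly. A secondary point worth a sentence is that one must also confirm $W\tau\log_2(1+k E/\tau)$ is genuinely strictly increasing in $\tau > 0$ for fixed $E > 0$ and $k > 0$ — this follows since its derivative equals $W\log_2(1+kE/\tau) - \frac{W kE/\tau}{\ln 2\,(1+kE/\tau)} = \frac{W}{\ln 2}\bigl(\ln(1+x) - \frac{x}{1+x}\bigr)$ with $x = kE/\tau > 0$, and $\ln(1+x) > x/(1+x)$ for all $x > 0$.

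Overall this is a short argument: rewrite in energy variables, scale times by $1/T$, invoke monotonicity for the strict improvement, verify the three constraints, and dispose of the all-zero-time corner case. I expect no computational grind beyond the one-line derivative check, and the proof should mirror the structure of Lemma~\ref{lemma:zerotau} closely enough that the exposition can be kept to a single compact paragraph.
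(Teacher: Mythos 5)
Your proof is correct, but it takes a different construction from the paper's. The paper disposes of the slack in two lines: it reintroduces the slack as an idle slot $\tau_0=\tau_\Delta$ appended to the schedule and then invokes Lemma~\ref{lemma:zerotau}, whose proof hands all of that idle time to the first scheduled user at fixed energy consumption, so only one user's slot and one monotonicity argument are involved and no new feasibility checks are needed. You instead rescale every slot by $1/T$ at fixed energies $E_i=P_i\tau_i$, which forces you to re-verify all three constraint families (powers scale down by $T\le 1$, the harvested-energy side of (\ref{energyharvesting}) only grows since every prefix sum $\sum_{j\le i}\tau_j$ grows, and the frame constraint becomes tight by construction) and to redo the monotonicity-in-$\tau$ derivative computation; the payoff is a self-contained argument that does not lean on Lemma~\ref{lemma:zerotau}'s internal construction and that strictly improves every active user simultaneously rather than just the first one. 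One small imprecision: your claim that ``if at least one $\tau_i^*>0$, that slot's throughput strictly increases'' needs $E_i^*>0$ on that slot as well (a user with positive time but zero power contributes zero throughput before and after scaling); this is closed by the same nontriviality observation you already make for the all-idle corner case, since $C_i=\eta_i h_iP_h>0$ guarantees the optimal value is positive, hence some slot has both $\tau_i^*>0$ and $E_i^*>0$. With that sentence added, the argument is complete and, if anything, slightly more careful about feasibility than the paper's.
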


\begin{proof}
Suppose that in the optimal solution, we have $\sum_{i=1}^{N}\tau_i^*<1$. Let $\tau_{\Delta}=1-\sum_{i=1}^{N}\tau_i^*$. Suppose that the optimal schedule is updated by introducing an unallocated time slot $\tau_0=\tau_{\Delta}$ such that $\tau_0+\sum_{i=1}^{N}\tau_i^*=1$ without changing the transmission power of the users. Then, by Lemma \ref{lemma:zerotau}, the schedule can be updated to yield a greater sum throughput. This is a contradiction.
\end{proof}

Next, we present the optimality condition on the power and time allocation for a user.

\begin{lemma} \label{lemma:powerallocation}
In the optimal solution of STMP, either constraint ($\ref{energyharvesting}$) or constraint $\ref{pmax}$ must be satisfied with equality; i.e., a user either transmits with maximum transmit power $P_{max}$ or consumes all its energy available until the completion of its transmission.
\end{lemma}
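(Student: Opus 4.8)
The plan is to argue by contradiction, exploiting the structure of the convex problem PTAP (which, by Lemma~\ref{lemma:p2con}, characterizes the optimal solution once the transmission order is fixed). Suppose that in the optimal solution there is some user $i$ for which \emph{both} constraints are slack: $P_i < P_{max}$ and $B_i + C_i\sum_{j=1}^{i}\tau_j - P_i\tau_i > 0$, i.e. the user neither hits its power cap nor exhausts its available energy. I would then show that one can strictly increase the objective, contradicting optimality.

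The key observation is that the objective term for user $i$, $W\tau_i\log_2(1+k_iP_i)$, is strictly increasing in $P_i$ for fixed $\tau_i > 0$ (and if $\tau_i = 0$ the user contributes nothing and we can instead borrow time from a frame-filling argument). First I would handle the easy case $\tau_i>0$: since $P_i<P_{max}$ and the energy constraint~\eqref{energyharvesting} has positive slack, there is room to increase $P_i$ by a small $\epsilon>0$ while keeping $P_i+\epsilon\le P_{max}$ and keeping the left-hand side of~\eqref{energyharvesting} nonnegative (the slack absorbs the extra consumption $\epsilon\tau_i$). Crucially, increasing $P_i$ affects only user $i$'s own energy constraint and user $i$'s own objective term — it does not appear in any other user's energy constraint~\eqref{energyharvesting} (those involve $C_j$ and the time variables, not $P_i$), nor in the frame-length constraint~\eqref{framelength}, nor in other users' power constraints. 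Hence the perturbed point is still feasible, and the objective strictly increases because $\log_2(1+k_i(P_i+\epsilon)) > \log_2(1+k_iP_i)$. This contradicts optimality, so at the optimum at least one of the two constraints must bind for every user.

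The remaining subtlety, and the step I expect to need the most care, is the boundary case $\tau_i = 0$: then user $i$ contributes zero throughput, its energy constraint reads $B_i + C_i\sum_{j=1}^{i}\tau_j \ge 0$ which is automatically strictly satisfied (since $B_i,C_i\ge 0$ and generically positive), and $P_i$ can be anything $\le P_{max}$, so \emph{neither} constraint need bind. I would address this either by noting that the lemma is vacuous/uninteresting for users that do not transmit, or — cleaner — by arguing that such a configuration cannot be optimal in the first place: by Lemma~\ref{lemma:length} the whole frame is used, so some other user $\ell$ has $\tau_\ell>0$; one could transfer an infinitesimal amount of time to user $i$, giving it $\tau_i=\delta>0$ and a positive throughput contribution $W\delta\log_2(1+k_iP_i)$ while the loss at user $\ell$ is an $O(\delta)$ term whose derivative can be dominated — though making that comparison precise requires a marginal-utility argument of the kind used in Lemma~\ref{lemma:zerotau}. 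For the purposes of this lemma it suffices to state it for the users that are actually scheduled ($\tau_i>0$), where the slack-absorption argument above is clean and complete; I would therefore phrase the proof around that case and remark that users with $\tau_i=0$ are irrelevant to the throughput.
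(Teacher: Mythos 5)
Your argument is correct and is essentially the paper's own proof: fix $\tau_i$, use the slack in both constraints to increase $P_i$ until one of them binds, and conclude that the throughput strictly increases, contradicting optimality. Your extra discussion of the $\tau_i=0$ boundary case (which the paper's proof leaves implicit) is a reasonable refinement but does not change the approach.
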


\begin{proof}
Suppose that $\tau^*=[\tau_1^*,\tau_2^*,...,\tau_N^*]$ and $P^*=[P_1^*,P_2^*,...,P_N^*]$ are optimal transmission time and power allocation for a set of users. Further suppose that for a user $k$, $P_k^*<P_{max}$ and $P_k^* \tau_k^* < B_k+C_k \sum_{j=1}^{k} \tau_k^*$. Then, while keeping $\tau_k^*$ constant, $P_k^*$ can be increased until it first becomes equal to either $P_{max}$ or $\frac{B_k+C_k \sum_{j=1}^{k}\tau_j^* }{\tau_k^*}$. Then, throughput of user $k$ increases since it is a monotonically increasing function of transmission power for a constant transmission time. This is a contradiction.
\end{proof}

In the following lemma, we illustrate an optimality condition on scheduling suggesting a prioritization among users based on their maximum transmission rates.

\begin{lemma} \label{lemma:rateordering}
Let $r_i^{max}$ be the transmission rate of user $i$ corresponding to maximum transmit power $P_i=P_{max}$; i.e., $r_i^{max}= W \log_{2}(1+k_i P_{max})$. Then, in the optimal solution of STMP, for any two users $i$ and $j$ such that $r_i^{max}<r_j^{max}$, if $\tau_i > 0$, then $\tau_j > 0$.
\end{lemma}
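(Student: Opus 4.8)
The plan is to argue by contradiction using an exchange/swap argument, similar in spirit to the proof of Lemma \ref{lemma:zerotau}. Suppose that in an optimal solution we have two users $i$ and $j$ with $r_i^{max} < r_j^{max}$, yet $\tau_i^* > 0$ while $\tau_j^* = 0$. The idea is to show that we can reallocate the time slot currently used by user $i$ (together with its position in the transmission order) to user $j$ and strictly increase the sum throughput, contradicting optimality.

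First I would set up the modified schedule carefully: keep every other user's time slot and power unchanged, remove user $i$ from the schedule (set $\tau_i^{**} = 0$), and insert user $j$ in exactly the position that user $i$ occupied, with $\tau_j^{**} = \tau_i^*$. The key point is that inserting $j$ in $i$'s slot does not disturb the energy-causality constraints of any other user: the prefix sums $\sum \tau$ appearing in \eqref{energyharvesting} for users other than $i,j$ are unchanged because the total time occupied by the block (and everything before it) is unchanged, and full-duplex harvesting means a user harvests over the whole frame regardless of who transmits when. Next I would check feasibility for user $j$ in its new slot: user $j$ must transmit at some power $P_j^{**}$ satisfying both $P_j^{**} \le P_{max}$ and the energy constraint $P_j^{**}\tau_i^* \le B_j + C_j(\text{prefix time up to that slot})$. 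Since $B_j \ge 0$ and $C_j \ge 0$, the quantity $B_j + C_j(\cdots) \ge 0$, so we can always choose $P_j^{**} > 0$ small enough to be feasible, and we may as well take $P_j^{**}$ as large as feasibility allows (capped at $P_{max}$ and at the energy budget).

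The heart of the argument is the throughput comparison. The new schedule's throughput differs from the old one only in that the term $W\tau_i^* \log_2(1+k_i P_i^*)$ is replaced by $W\tau_i^* \log_2(1+k_j P_j^{**})$. I need $k_j P_j^{**} \ge k_i P_i^*$, i.e. that user $j$ with its (possibly smaller) available energy still achieves at least the rate user $i$ was achieving. This is where $r_i^{max} < r_j^{max}$, equivalently $k_i < k_j$, enters. User $i$'s old rate satisfies $W\log_2(1+k_i P_i^*) \le W\log_2(1+k_i P_{max}) = r_i^{max} < r_j^{max} = W\log_2(1+k_j P_{max})$. If in the new schedule user $j$ is not energy-constrained (i.e. can reach $P_{max}$), then its rate is exactly $r_j^{max} > r_i^{max} \ge$ old rate of $i$, and we are done. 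The delicate case is when user $j$ is energy-constrained in slot $i$. Here I would argue that the energy available to user $j$ over that slot is at least the energy user $i$ consumed: by Lemma \ref{lemma:powerallocation}, in the optimal solution user $i$ either transmits at $P_{max}$ or exhausts its available energy $B_i + C_i(\text{prefix})$; a monotonicity/comparison of the harvested-energy and initial-battery terms — together with the fact that replacing $i$ by $j$ does not shrink the prefix time — should give $B_j + C_j(\text{prefix}) \ge$ (energy $i$ had available), hence $j$ can push at least as much energy through the slot, and then the larger $k_j$ makes $j$'s rate strictly exceed $i$'s old rate.

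The main obstacle I anticipate is precisely closing that energy-constrained subcase: a priori there is no relation between $B_i, C_i$ and $B_j, C_j$, so the claim "$j$ inherits at least as much usable energy as $i$ had" is not automatic and may require either an additional structural observation (e.g. that if $j$ is energy-starved in that slot then so was $i$, and then compare the two energy budgets directly term by term) or a continuity argument — shrinking $\tau_i^*$ slightly and giving the freed time to another active user while still activating $j$ with a tiny $\tau_j > 0$, exploiting that $\log_2(1+k_j P_j)$ has positive derivative at $P_j=0$ so an infinitesimal activation of $j$ produces first-order throughput gain at rate $r_j^{max}$ per unit time if $j$ is not yet energy-limited. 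I would keep both the global-swap version and the infinitesimal-perturbation version in mind and present whichever yields a clean contradiction; the perturbation version has the advantage that for arbitrarily small $\tau_j$ user $j$ is never energy-constrained (since $B_j + C_j\tau_0 \ge 0$ and the required power $P_j = (\text{energy budget})/\tau_j$ can be made to stay below $P_{max}$ only if the budget is zero — so one must be careful, and the clean route is likely to take energy from an already-active, non-starved neighbor rather than relying solely on harvested energy).
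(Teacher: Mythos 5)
Your primary route (swapping user $j$ into the whole slot of user $i$) has exactly the gap you yourself flag, and it is not repairable: nothing in the hypotheses relates $B_j,C_j$ to $B_i,C_i$, so in the energy-constrained subcase user $j$ simply may not be able to sustain user $i$'s old rate over a slot of length $\tau_i^*$, and the claim ``$j$ inherits at least as much usable energy as $i$ had'' is false in general. Since you present the swap as the main argument and leave this case open, the proposal as written does not establish the lemma.

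The fix is your second, perturbation idea --- which is what the paper actually does --- but your hesitation about it rests on a misconception: user $j$ is \emph{not} required to spend its entire budget, because the energy-causality constraint (\ref{energyharvesting}) is an inequality, so there is no forced $P_j=(\text{budget})/\tau_j$ blow-up as $\tau_j\to 0$, and no need to ``take energy from a neighbor'' (energy cannot be moved between users in this model anyway). Concretely: with $\tau_i^*>0$, $\tau_j^*=0$ and $r_j^{max}>r_i^{max}$, pick any $\tau' < \min\{B_j/P_{max},\,\tau_i^*\}$, shrink user $i$'s slot to $\tau_i^*-\tau'$, insert a slot of length $\tau'$ for user $j$ adjacent to it with $P_j=P_{max}$, and leave everything else untouched. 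Feasibility is immediate: $P_{max}\tau'<B_j$ handles user $j$; user $i$'s consumption only decreases while its prefix time does not; all other users' prefix sums and the total frame length are unchanged. The sum throughput changes by $\tau'\bigl(r_j^{max}-r_i^*\bigr)\ge \tau'\bigl(r_j^{max}-r_i^{max}\bigr)>0$, where $r_i^*\le r_i^{max}$ is user $i$'s actual rate, contradicting optimality --- note the comparison is only over the transferred sliver $\tau'$, which is why no comparison of the two users' energy budgets is ever needed. (One residual edge case, which the paper's own proof also glosses over: if $B_j=0$ the bound $\tau'<B_j/P_{max}$ degenerates, but then one places $j$'s micro-slot after a positive amount of elapsed frame time and uses the harvested energy $C_j\sum\tau$ in the same way.)
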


\begin{proof}
Suppose that $\tau^*=[\tau_1^*,\tau_2^*,...,\tau_N^*]$ and $P^*=[P_1^*,P_2^*,...,P_N^*]$ are optimal time and power allocation such that $\tau_i^*=0$ and $\tau_j^*>0$ for some $i$ and $j$ such that $r_i^{max}>r_j^{max}$. Then, for any $\tau^{'} < \min \lbrace B_i/P_{max}, \tau_j^* \rbrace$, $\tau^*$ and $P^*$ can be updated as $\tau_i^*=\tau^{'}$, $\tau_j^*=\tau_j^*-\tau^{'}$, and $P_i^*=P_{max}$. Then, throughput is increased by $R^{'}=\tau^{'} (r_i^{max}-r_j^*)>\tau^{'} (r_i^{max}-r_j^{max})>0$. This is a contradiction.
\end{proof}

Lemma \ref{lemma:rateordering} suggests that users with higher maximum rates that can contribute to the sum throughput more should be prioritized by a scheduling algorithm. Note that the optimal schedule does not necessarily contain a time slot for each and every user in the network as long as the maximum sum throughput is achieved using a subset of users. However, for instance, the user with maximum transmission rate should be given nonzero time slot length as stated by the following corollary of Lemma \ref{lemma:rateordering}.

\begin{corollary} \label{cor:1}
Let user $k$ has $r_k^{max}=\max_i r_i^{max}$. Then, there exists an optimal solution in which $\tau_k>0$.
\end{corollary}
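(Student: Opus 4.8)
The plan is to derive the corollary by combining Lemma~\ref{lemma:length} (the scheduling frame is fully used in any optimal solution) with Lemma~\ref{lemma:rateordering} (the rate‑ordering property), patching in a short perturbation argument to cover the case of ties in the maximum rate. First I would fix an arbitrary optimal solution $(\tau^*,P^*)$ of STMP and apply Lemma~\ref{lemma:length} to get $\sum_{i=1}^{N}\tau_i^*=1$, so that some user $m$ has $\tau_m^*>0$. If $m=k$ there is nothing to prove, so assume $m\neq k$; since $r_k^{max}=\max_i r_i^{max}$ we have $r_m^{max}\le r_k^{max}$. When $r_m^{max}<r_k^{max}$, Lemma~\ref{lemma:rateordering} applied to the pair $(m,k)$ immediately gives $\tau_k^*>0$, so the chosen optimal solution already has the desired property.

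The remaining case is $r_m^{max}=r_k^{max}$, which forces $k_k=k_m$ because $r_i^{max}=W\log_2(1+k_iP_{max})$ is strictly increasing in $k_i$; thus users $k$ and $m$ have identical rate–power curves and differ only in their energy parameters $B_i,C_i$. Here I would build a new solution by inserting, for a small $\tau'>0$, a slot of length $\tau'$ for user $k$ immediately adjacent to $m$'s slot (or at the very end of the frame), shrinking $m$'s slot to $\tau_m^*-\tau'$ with $P_m^*$ unchanged, and setting $P_k=P_{max}$. Since the modified block has the same total length $\tau_m^*$ (resp.\ the frame still has length $1$), the cumulative‑time profile seen by every other user is unchanged, so their constraints~(\ref{energyharvesting}) still hold; user $m$'s available energy is unchanged while its consumption only decreases, so $m$ stays feasible; and user $k$ has harvested at least $C_k\tau'>0$ over its slot (exactly $B_k+C_k$ when it is placed last), which makes $P_k=P_{max}$ feasible for $\tau'$ small. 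The change in sum throughput is $\tau'\bigl(r_k^{max}-r_m^*\bigr)\ge 0$ because $r_m^*=W\log_2(1+k_mP_m^*)\le W\log_2(1+k_mP_{max})=r_k^{max}$, so the new solution is again optimal and has $\tau_k=\tau'>0$.

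The main obstacle I anticipate is the bookkeeping in the degenerate subcase where $B_k=0$ and essentially one user $m$ occupies the whole frame with a tight energy constraint: one must ensure $m$'s slot can be shrunk at constant power. I would resolve this by placing $k$'s slot at the end of the frame (where $k$ has harvested $B_k+C_k>0$) and noting that if $m$ is the sole active user then $\tau_m^*=1$ by Lemma~\ref{lemma:length}, so a tight constraint forces $P_m^*=B_m+C_m\ge C_m$, which makes shrinking $m$ at constant power feasible. Apart from this, the argument is routine; in particular, under the generic assumption that the $r_i^{max}$ are distinct the corollary is an immediate consequence of Lemmas~\ref{lemma:length} and~\ref{lemma:rateordering}.
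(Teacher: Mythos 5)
Your main line is exactly the argument the paper intends: Lemma~\ref{lemma:length} guarantees some active user $m$ with $\tau_m^*>0$, and Lemma~\ref{lemma:rateordering} applied to the pair $(m,k)$ gives $\tau_k^*>0$ whenever $r_m^{max}<r_k^{max}$ (the paper states the corollary without proof as an immediate consequence of Lemma~\ref{lemma:rateordering}). Treating the tie $r_m^{max}=r_k^{max}$ separately is a legitimate extra step, since the lemma is stated for strict inequality and the corollary is phrased as ``there exists an optimal solution.''

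However, the tie-case perturbation as you justify it has a gap: the two feasibility claims cannot both be secured with the placements you describe. If $k$'s new slot is inserted \emph{before} $m$'s shrunk slot, then $m$'s available energy is indeed unchanged, but $k$'s available energy can be as small as $B_k+C_k\tau'$ (e.g.\ when $m$ is the first active user), and then ``$k$ has harvested at least $C_k\tau'>0$'' does not make $P_k=P_{max}$ feasible for small $\tau'$: both sides of $P_{max}\tau'\le B_k+C_k\tau'$ scale linearly in $\tau'$, so smallness never helps when $B_k=0$ and $P_{max}>C_k$. Your patch covers only the subcase in which $m$ is the sole active user; it does not cover, say, $B_k=0$ with $m$ first in the order and other active users after it, and moving $k$ to the end of the frame in that configuration shifts the intermediate users earlier, which can violate their energy constraints (\ref{energyharvesting}) when these are tight (which Lemma~\ref{lemma:powerallocation} says is typical). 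The clean fix is to always insert $k$'s slot immediately \emph{after} $m$'s shrunk slot. Then every other user's cumulative-time profile is unchanged; $k$'s available energy is $B_k+C_k T_m\ge C_k\tau_m^*>0$, where $T_m$ is the schedule time up to and including $m$'s original slot, a quantity independent of $\tau'$, so $P_k=P_{max}$ is feasible for $\tau'$ small; and $m$ remains feasible because a tight constraint (\ref{energyharvesting}) forces $P_m^*=(B_m+C_m T_m)/\tau_m^*\ge C_m$, while otherwise the slack is at least $(C_m-P_m^*)\tau_m^*\ge(C_m-P_m^*)\tau'$. With this placement the comparison $\tau'\bigl(r_k^{max}-r_m^*\bigr)\ge 0$ holds for an arbitrary active user $m$, so the construction also settles the tie case (and in fact yields the corollary without invoking Lemma~\ref{lemma:rateordering} at all).
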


Depending on the energy available for users; i.e., initial battery levels and energy harvesting capabilities, some high rate users may utilize the entire scheduling frame. For instance, if the maximum rate user can afford to transmit with $P_{max}$ for the entire scheduling frame without violating the energy causality constraint, the optimal schedule will only contain this particular user as indicated by the following corollary of Lemma \ref{lemma:rateordering}.

\begin{corollary} \label{cor:2}
Let user $k$ has $r_k^{max}=\max_i r_i^{max}$. Then, if $B_i/(P_{max}-C_i)\geq 1$, there exists an optimal solution in which $\tau_k=1$; i.e. $k$ is allocated to the entire scheduling frame.
\end{corollary}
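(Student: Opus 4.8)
The plan is to use a global bounding argument: show that $r_k^{max}$ is an upper bound on the optimal sum throughput of STMP, and then exhibit a feasible solution with $\tau_k=1$ that attains this bound, so that such a solution must be optimal.

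First I would establish the upper bound. For any feasible point of STMP, monotonicity of $\log_2(1+k_iP_i)$ in $P_i$ together with $P_i\leq P_{max}$ gives $W\tau_i\log_2(1+k_iP_i)\leq\tau_i r_i^{max}$ for each $i$; summing over $i$ and using $r_i^{max}\leq r_k^{max}$ and $\sum_{i=1}^{N}\tau_i\leq 1$ (which follows from (\ref{framelength}) together with $\tau_0\geq 0$) yields $\sum_{i=1}^{N}W\tau_i\log_2(1+k_iP_i)\leq r_k^{max}$. Hence the optimal value of STMP is at most $r_k^{max}$.

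Next I would construct the candidate solution: set $\tau_k=1$, $P_k=P_{max}$, and $\tau_i=P_i=0$ for all $i\neq k$, with an arbitrary transmission order. I would verify feasibility constraint by constraint: (\ref{framelength}) holds with equality; (\ref{pmax}) holds; for $i\neq k$ the energy causality constraint (\ref{energyharvesting}) reduces to $B_i\geq 0$; and for user $k$, since every other slot is empty we have $\sum_{j=1}^{N}a_{jk}\tau_j=0$, so (\ref{energyharvesting}) becomes $B_k+C_k-P_{max}\geq 0$, i.e. $B_k\geq P_{max}-C_k$, which is exactly the hypothesis (rearranged: $B_k/(P_{max}-C_k)\geq 1$ when $P_{max}>C_k$, and trivially true when $P_{max}\leq C_k$). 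This solution achieves objective value $W\log_2(1+k_kP_{max})=r_k^{max}$, matching the upper bound, so it is optimal, and it has $\tau_k=1$.

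I do not expect a serious obstacle. The only points requiring care are: (i) reading the hypothesis as ``user $k$ can sustain $P_{max}$ for the entire frame'' — the index in $B_i/(P_{max}-C_i)$ should be $k$, and the degenerate case $P_{max}\leq C_k$ should be handled separately; and (ii) observing that in the constructed solution the ordering of slots is irrelevant, because all slots other than $k$'s have zero length, so the harvested-energy term $\sum_{j}a_{jk}\tau_j$ vanishes regardless of where $k$ is placed. One could alternatively derive the corollary from Lemma \ref{lemma:rateordering} via a local exchange argument (moving time from any other active user to $k$ at power $P_{max}$), but the global bounding argument above is cleaner and avoids bookkeeping over the energy constraints of intermediate users.
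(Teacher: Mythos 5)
Your proof is correct, and it takes a genuinely different route from the paper. The paper states this result as a corollary of Lemma \ref{lemma:rateordering} without a written proof; the intended argument is the local exchange you mention in passing --- under the hypothesis, any time allocated to a user other than $k$ can be transferred to $k$ transmitting at $P_{max}$ without violating $k$'s energy causality constraint, which by the reasoning of Lemma \ref{lemma:rateordering} cannot decrease the sum throughput. You instead prove the result directly: the chain $W\tau_i\log_2(1+k_iP_i)\leq \tau_i r_i^{max}\leq \tau_i r_k^{max}$ together with $\sum_{i=1}^N\tau_i\leq 1$ caps the optimal value at $r_k^{max}$, and your explicit candidate ($\tau_k=1$, $P_k=P_{max}$, all other slots empty) is feasible precisely because the hypothesis gives $B_k+C_k-P_{max}\geq 0$, and it attains the cap. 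This is self-contained, does not lean on Lemma \ref{lemma:rateordering} at all, and yields a slightly stronger conclusion (the optimal value equals $r_k^{max}$ under the hypothesis); the paper's route, by contrast, keeps the corollary embedded in the structural-lemma framework that motivates the MFSA algorithm, which is what the authors care about. Your two points of care are also well taken: the index $i$ in the stated hypothesis is evidently a typo for $k$, and the degenerate case $P_{max}\leq C_k$ (where the ratio $B_k/(P_{max}-C_k)$ is undefined or negative but feasibility of full-frame transmission at $P_{max}$ is automatic) is worth flagging exactly as you did. The only cosmetic caution is the notational collision between the user index $k$ and the channel coefficient $k_k$ in $\log_2(1+k_kP_{max})$, which is inherited from the paper's notation and harmless.
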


Next, based on the foregoing analysis, we propose Maximum-Rate First Scheduling Algorithm (MFSA), given in Algorithm \ref{algo_MFSA}. Suppose a set $\cal{F}$ of users are sorted in decreasing order of maximum transmission rates; i.e. $r_1^{max}\geq r_2^{max} \geq ... \geq r_N^{max}$. As suggested by Lemma \ref{lemma:rateordering} and its corollaries, MFSA starts allocation from the maximum rate user and iteratively determines the time and power allocation of each user in decreasing order of maximum rates. Note that MFSA allocates the users starting from the end of the schedule to allow higher rate users harvest more energy. As indicated by Corollary \ref{cor:2}, if maximum rate user can afford to transmit with $P_{max}$, it will be allocated to the entire scheduling frame and no other user will be given any time slot (Lines $7-10$). Otherwise, it determines if next user should be given a nonzero time slot. This is done by pairwise evaluation between two users, users $i$ and $i-1$. For any time duration $\tau$ to be allocated between users $i$ and $i-1$, there are $3$ different optimality cases due to convexity of throughput as a function of time duration for constant energy consumption (we omit detailed analysis due to space limitations). Note that the user with higher rate, user $i$, should be allocated at least for a duration $\tau_i^{min}$ which is the maximum time duration it can afford $P_{max}$. In case $1$, user $i$ is allocated with $P_{max}$ in a time slot $\tau_i^{min}$ (Line $12$) and the rest of the available time duration $t_a$ is allocated to user $i-1$ with maximum feasible transmit power (Line $13$). In case $2$, user $i-1$ is allocated with $P_{max}$ in maximum feasible time slot length without violating the energy causality constraint and considering that user $i$ should be allocated at least for a duration $\tau_i^{min}$ (Line $15$). The rest of the available time duration $t_a$ is allocated to user $i$ with maximum feasible transmit power (Line $16$). In case $3$, user $i$ is allocated to the entire available time duration $t_a$ with maximum feasible transmit power and user $i-1$ is given no time slot (Lines $18-19$). Then MFSA evaluates which case yields the maximum sum throughput for users $i$ and $i-1$ (Lines $20-21$). If case $3$ is optimal, then it means that user $i-1$ cannot be given any time slot which indicates that lower rate users also cannot be allocated due to Lemma \ref{lemma:rateordering}. Hence, MFSA terminates (Lines $22-25$). If case $2$ is optimal, it means user $i-1$ is allocated with $P_{max}$ which hinders the possibility of allocating next lower rate users with the same pairwise evaluation due to Corollary \ref{cor:2} and MFSA terminates (Lines $22-25$). If case $1$ is optimal, then user $i$ is allocated with $P_{max}$ to maximum feasible time duration (Line $27$), the available time duration for the rest of the users is updated (Line $28$), and MFSA continues with next two highest rate users, mainly users $i-1$ and $i-2$. The computational complexity of MFSA is $\mathcal{O}(N)$.

\begin{algorithm} 
\caption{Maximum-Rate First Scheduling Algorithm}  \label{algo_MFSA}
\begin{algorithmic}[1] 
\STATE \textbf{Input:} set of users $\cal{F}$
\STATE \textbf{Output:} $\tau^*=[\tau_1^*,\tau_2^*,...,\tau_N^*]$, $P^*=[P_1^*,P_2^*,...,P_N^*]$
\STATE $\tau^*=[0,0,...,0]$, $P^*=[0,0,...,0]$,
\STATE $t_a=1$, 
\FOR {$i=1:{|\cal{F}|}-1$}
\STATE $\tau_{i}^{min}=(B_i+C_i t_a)/P_{max}$,
\IF{$\tau_{i}^{min}\geq t_a$}
\STATE $\tau_i^*=t_a$, $P_i^*=P_{max}$,
\STATE break,
\ENDIF
\STATE \textbf{Case 1}:
\STATE $\tau_{i,1}=\tau_{i}^{min}$, $P_{i,1}=P_{max}$,
\STATE $\tau_{i-1,1}=t_a-\tau_{i,1}$,\\ $P_{i-1,1}=\min\lbrace(B_{i-1}+C_{i-1} \tau_{i-1,1})/\tau_{i-1,1}, P_{max}\rbrace$,
\STATE \textbf{Case 2}:
\STATE $\tau_{i-1,2} =\min \lbrace B_{i-1}/(P_{max}-C_{i-1}),  (t_a-\tau_{i}^{min}) \rbrace$, \\$P_{i-1,2}=P_{max}$,
\STATE $\tau_{i,2}=t_a-\tau_{i-1,2}$, $P_{i,2}=(B_i+C_i t_a)/\tau_{i,2}$,
\STATE \textbf{Case 3}:
\STATE $\tau_{i,3}=t_a$, $P_{i,3}=(B_i+C_i t_a)/\tau_{i,3}$,
\STATE $\tau_{i-1,3}=0$, $P_{i-1,3}=0$,
\STATE determine $R_1$, $R_2$, and $R_3$,
\STATE $k =$ arg$\max\lbrace R_1, R_2, R_3\rbrace$,
\IF {$k = 2$ or $k = 3$}
\STATE $\tau_i^*=\tau_{i,k}$, $\tau_{i-1}^*=\tau_{i-1,k}$,
\STATE $P_i^*=P_{i,k}$, $P_{i-1}^*=P_{i-1,k}$,
\STATE break,
\ELSE
\STATE $\tau_i^*=\tau_{i,k}$, $P_i^*=P_{i,k}$,
\STATE $t_a=t_a-\tau_i^*$,
\IF{$i={|\cal{F}|}-1$}
\STATE $\tau_{i-1}^*=\tau_{i-1,k}$, $P_{i-1}^*=P_{i-1,k}$,
\ENDIF
\ENDIF
\ENDFOR
\end{algorithmic}
\end{algorithm}   

\section{Performance Evaluation} \label{sec:performance}

The goal of this section is to evaluate the performance of the proposed algorithm. The proposed scheduling algorithm MFSA is compared to optimal solution, denoted by OPT and equal time allocation based scheduling scheme, denoted by ETA. OPT is obtained by enumerating all possible transmission orders and determining the one with maximum sum throughput by solving PTAP problem for each. ETA allocates equal time to each user; i.e., $\tau_i = 1/N$, and allocates maximum feasible transmit power for each user such that either maximum transmit power or energy causality constraint is satisfied with equality as stated by Lemma \ref{lemma:powerallocation}.

Simulation results are obtained by averaging over $200$ independent random network realizations. The channel gains considering large-scale statistics are determined using the path loss model given by 
\begin{equation}
PL(d)=PL(d_0)+10\alpha log_{10}\bigg(\frac{d}{d_0}\bigg)+\emph{Z}
\end{equation}
where $PL(d)$ is the path loss at distance $d$ in $dB$, $d_0$ is the reference distance, $\alpha$ is the path loss exponent, and $Z$ is a zero mean Gaussian random variable with standard deviation $\sigma$. The Rayleigh fading has been used to model small-scale fading with scale parameter  $\Omega_i$ set to mean power level obtained from the large-scale path loss model. The parameters used in the simulations are $\eta_i=1$ $\forall i$, $W= 1$ MHz, $d_0=1$ m, $PL(d_0)=30$ dB, $\alpha=2.76$, and $\sigma=4$.

\begin{figure}[t]
\centering
\includegraphics[width= 0.8 \linewidth]{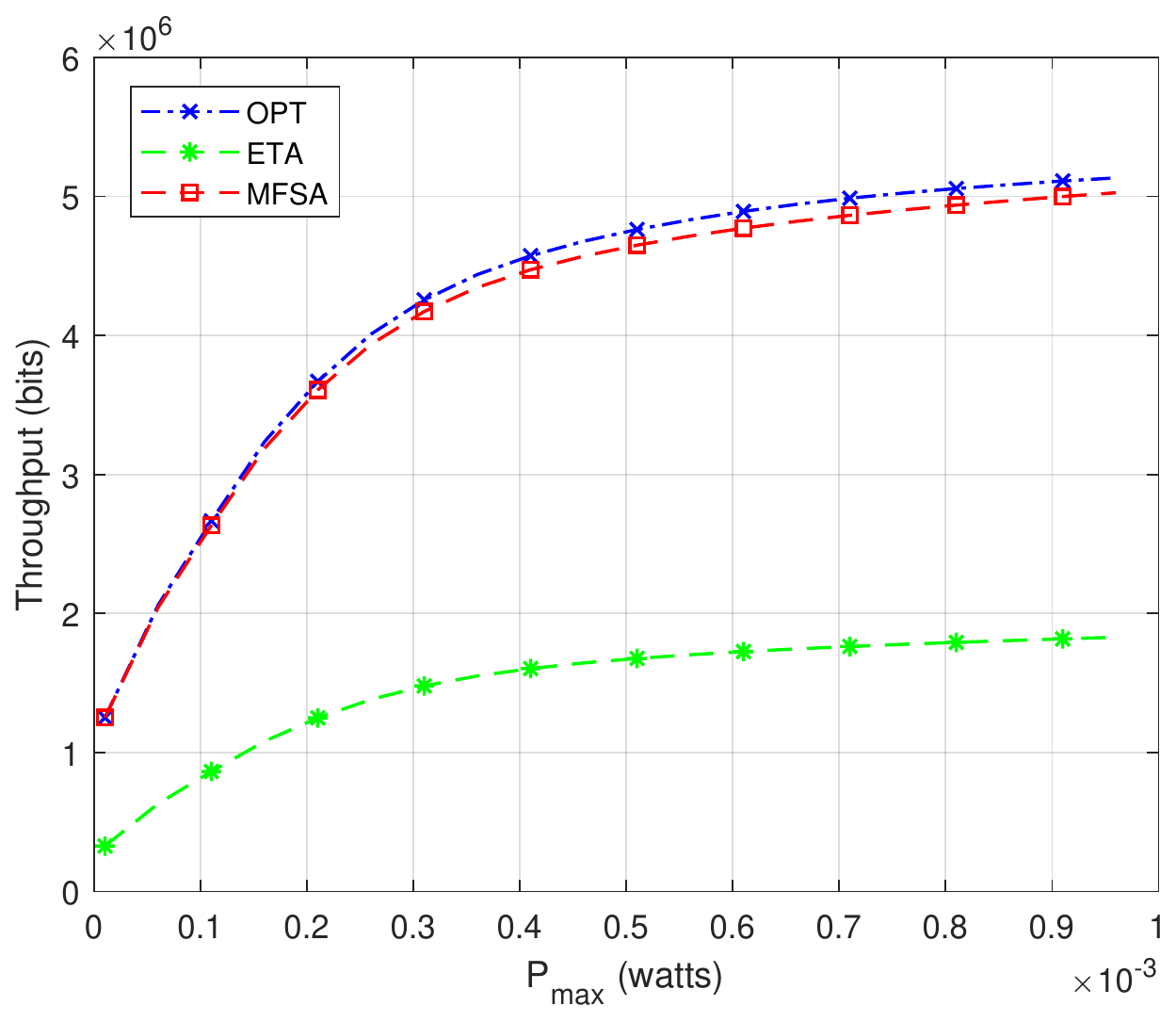}
\caption{Throuhput vs. maximum transmit power $P_{max}$} \label{fig:pmax}
\end{figure}

Fig. \ref{fig:pmax} illustrates the sum throughput for varying $P_{max}$ values in a network of $6$ users. As the figure clearly depicts, MFSA outperforms ETA significantly while achieving close-to-optimal performance. The superiority of MFSA over ETA is due to fact that MFSA favors the users with higher rate performance compared to low rate users while allocating transmission times to maximize the sum throughput while ETA does not consider the contribution of each user on the sum throughput and allocates time slots equally which degrades the performance dramatically. Small $P_{max}$ values favor high rate users to be given larger time slots compared to low rate users. For instance, if the highest rate user can afford to transmit with $P_{max}$ for the entire schedule, it will be scheduled accordingly by MFSA. Hence, the performance of MFSA is nearly optimal for smaller $P_{max}$ values. On the other hand, as $P_{max}$ increases, the expected transmission time for small rate users will increase, increasing the variation of the performance of MFSA from the optimal solution. However, MFSA still achieves robustness against this variance and resembles optimal solution for large $P_{max}$ values.

\begin{figure}[t]
\centering
\includegraphics[width= 0.8 \linewidth]{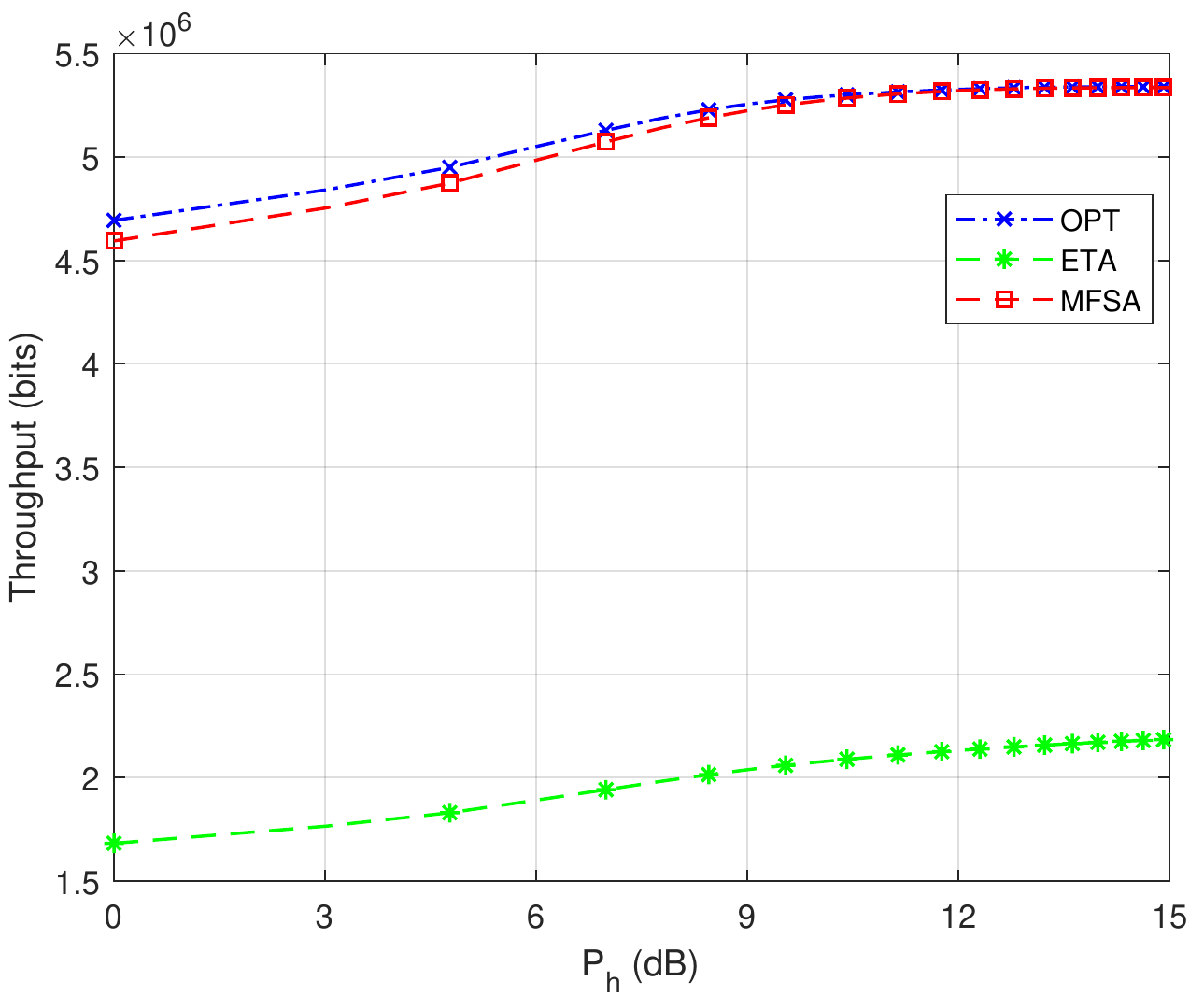}
\caption{Throuhput vs. HAP transmit power $P_{h}$} \label{fig:ph}
\end{figure}

%\begin{figure}[t]
%\centering
%\includegraphics[width= 0.9 \linewidth]{B}
%\caption{Throuhput vs. initial battery level $B$} \label{fig:b}
%\end{figure}

Fig. \ref{fig:ph} illustrates the sum throughput for varying $P_{h}$ values in a network of $6$ users. High HAP power indicates that users can harvest more energy and achieve higher transmit power and rate until they are constrained by maximum transmit power constraint. Therefore, for higher HAP power values, users are expected to transmit with $P_{max}$ with higher probability and hence MFSA converges to the optimality. However, MFSA performs very close to optimal for a wide range of $P_h$ values. Moreover, MFSA significantly outperforms ETA. 
%A similar performance behavior can be observed from Fig. \ref{fig:b}. Note that energy available for a user for its transmission is the sum of the energy harvested from the HAP and the initial energy available in its battery. Hence, high initial battery values yield the same effect as high HAP power values. While MFSA performs slightly suboptimal for low initial battery levels, it converges to the optimality as initial battery level increases. 

\begin{figure}[t]
\centering
\includegraphics[width= 0.8 \linewidth]{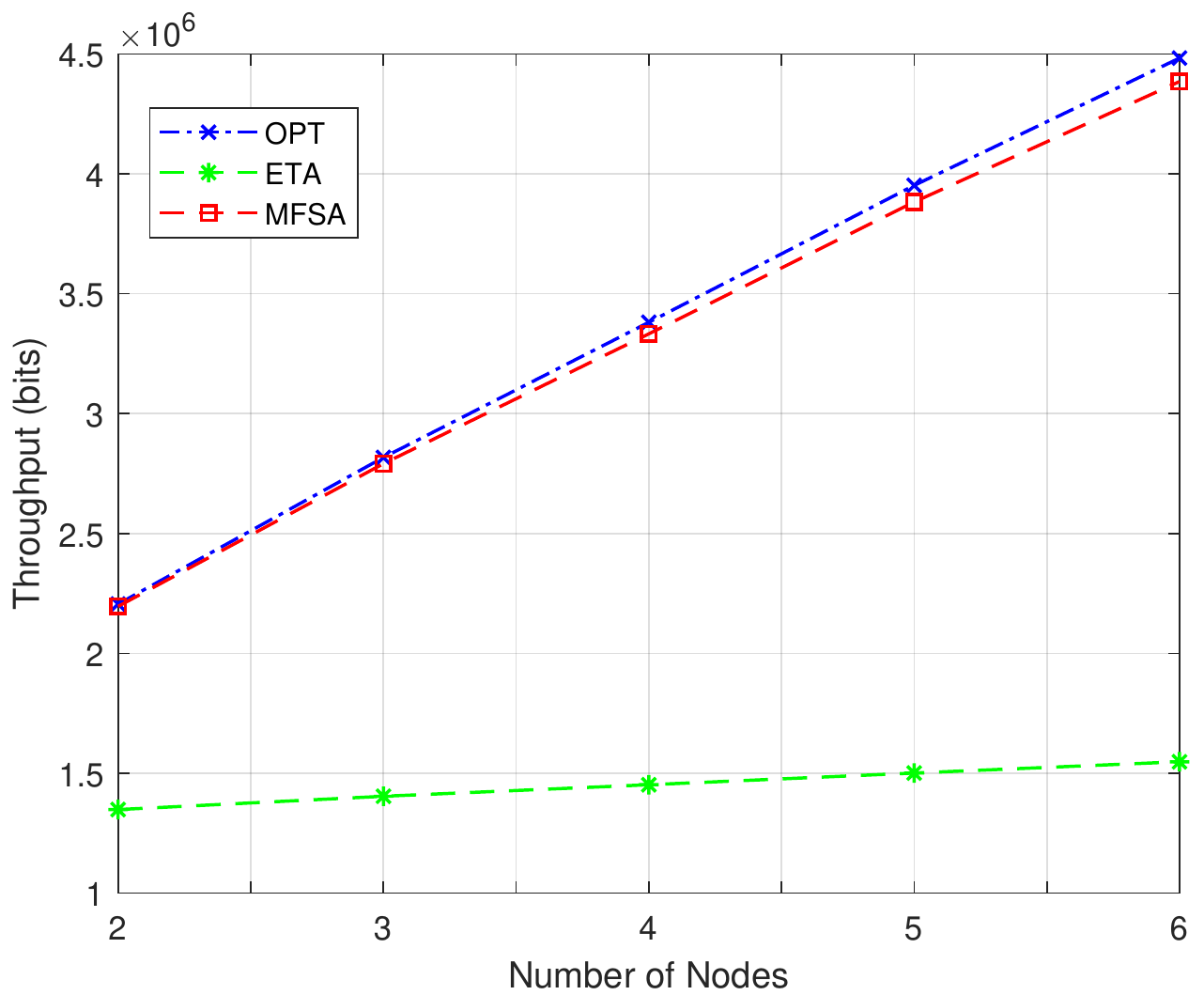}
\caption{Throuhput vs. number of users ($N$)} \label{fig:num}
\end{figure}

In Fig. \ref{fig:num}, we investigate the effect of the network size on the performance of the proposed algorithm. Throughput increase yielded by MFSA with the addition of each user is larger compared to the ETA algorithm showing the significant effect of variable size time allocation performed by MFSA on the performance. This suggests that ETA suffers from throughput performance of low rate users equally allocated with high rate users while MFSA can handle this problem by efficient allocation of time and power to the users based on their throughput performance. MFSA also shows robustness against increasing network size and performs very close to optimal.

\section{Conclusions} \label{sec:conclusion}

In this paper, we have investigated sum throughput maximization problem (SMTP) for a realistically modelled in-band full-duplex WPCN system in a new optimization framework employing the maximum transmit power constraint and an energy causality constraint which considers the initial battery levels and the energy storage capability of the users. We have mathematically formulated SMTP as an MINLP problem which is generally difficult to solve for a global optimum. To solve this intractability problem, we have provided a solution strategy in which the power control and time allocation, and the scheduling decisions are decomposed. We have also formulated the optimal power and time allocation problem for a given scheduling order of users and showed its convexity. Then, analyzing the optimality conditions on the power and time allocation and scheduling, we have proposed a polynomial-time complexity heuristic algorithm. Through extensive simulations, we have illustrated that the proposed algorithm performs very close to optimal while outperforming an equal time allocation based scheduling scheme significantly for various numerical scenarios. 
%For future work, we will investigate determining closed form expressions for optimal power and time allocation for a given scheduling order. Moreover, we will investigate determining the scheduling order optimally. We will further generalize these studies to discrete-rate transmission model.

\bibliography{shahid_bib}
\bibliographystyle{ieeetr}
\end{document}